 \definecolor{BLACK}{gray}{0}
 \definecolor{WHITE}{gray}{1}
 \definecolor{RED}{rgb}{1,0,0}
 \definecolor{GREEN}{rgb}{0,1,0}
 \definecolor{BLUE}{rgb}{0,0,1}
 \definecolor{CYAN}{cmyk}{1,0,0,0}
 \definecolor{MAGENTA}{cmyk}{0,1,0,0}
 \definecolor{YELLOW}{cmyk}{0,0,1,0}
\theoremstyle{plain}
 \let\footnote=\endnote
\newtheorem{thm}{Theorem}  
 \newtheorem{lem}{Lemma}
\newtheorem{prop}{Proposition} 
\newtheorem{remark}{Remark}
\begin{document}

\title{More nonlocality with less purity}

\author{Somshubhro Bandyopadhyay}

\address{Department of Physics and Center for Astroparticle Physics and Space
Science, Bose Institute, Block EN, Sector V, Salt Lake, Kolkata 700091,
India}

\email{som@bosemain.boseinst.ac.in}
\begin{abstract}
Quantum information is nonlocal in the sense that local measurements
on a composite quantum system, prepared in one of many mutually orthogonal
states, may not reveal in which state the system was prepared. It
is shown that in the many copy limit this kind of nonlocality is fundamentally
different for pure and mixed quantum states. In particular, orthogonal
mixed states may not be distinguishable by local operations and classical
communication (LOCC), no matter how many copies are supplied, whereas
any set of \emph{N} orthogonal pure states can be perfectly discriminated
with $m$ copies, where, $m<N$. Thus mixed quantum states can exhibit
a new kind of nonlocality absent in pure states. We also argue that
a set of orthogonal quantum states may be said to be maximally indistinguishable
iff the set is not conclusively locally distinguishable with multiple
copies. 
\end{abstract}
\maketitle
One of the primary goals in quantum information theory is to understand
the relationship between nonlocality of quantum information and entanglement.
The nonlocality of quantum information is in the sense that joint
measurements on a composite quantum system can reveal more about the
state than by local operations and classical communication (LOCC)
alone. The problem of local distinguishability of quantum states \cite{Bandyo+Walgate-2009,Bandyopadhyay2010,Bennett-I-99,Bennett-II-99,Chefles2004,Chen+Li,ChenYang2002,Cohen2007,Duan2007,DuanFeng2007,Fan-2004,Ghosh-2001,Ghosh-2002,Ghosh-2004,Hayashi-etal-2006,Horodecki-2003,Nathanson-2005,Virmanietal2001,Walgate-2000,Walgate-2002,Watrous-2005,Bagan-2010}
provides the setting in which questions on this kind of nonlocality
and entanglement are generally explored. 

The set up of local distinguishability of quantum states is simple:
Two or more spatially separated observers share a composite quantum
system prepared in one of many mutually orthogonal quantum states.
Their goal is to identify the state in which the system was prepared
within the constraints of LOCC; that is, they can perform any sequence
of coordinated measurements on their respective subsystems but are
not allowed to exchange quantum states. In some cases they can indeed
accomplish this task without error and in some they cannot. For example,
any two pure orthogonal states can be perfectly distinguished \cite{Walgate-2000}
whereas, it is impossible to do so if the unknown state belongs to
a complete orthogonal basis in which one or more states are entangled
\cite{Horodecki-2003}. 

While entanglement has been known to ensure difficulty in state discrimination
\cite{Hayashi-etal-2006} and is commonplace in most LOCC indistinguishable
sets of quantum states, it is however, not necessary. The product
states exhibiting {}``non-locality without entanglement'' or forming
an unextendible product basis (UPB), despite being mutually orthogonal,
cannot be perfectly distinguished by LOCC \cite{Bennett-I-99,Bennett-II-99}.
In all the aforementioned examples complete information about the
state can only be obtained by measuring the whole system rather than
local measurements of its parts, and hence these states are said to
exhibit nonlocality. 

A central feature of LOCC discrimination of quantum states is that
the observers hold a single copy of the unknown state, and under this
{}``single copy'' constraint the nonlocal nature of quantum information
is manifested. Suppose we relax this assumption and allow many copies
of the unknown state to be shared between the participants, then reliable
identification of the given state may be possible. For example, the
canonical Bell basis in $d\otimes d$ can be perfectly distinguished
with only two copies \cite{Ghosh-2004} while it is indistinguishable
for a single copy \cite{Ghosh-2004,Fan-2004,Nathanson-2005}. Unfortunately,
a general result on the number of copies sufficient to reliably distinguish
an arbitrary orthogonal ensemble is not yet known. Nevertheless, it
seems reasonable to conjecture that if sufficiently many copies of
the unknown state are available, one can perfectly distinguish any
set of orthogonal quantum states. Consequently, the nonlocality of
quantum information that we are talking about may not be observed
in the many copy limit. 

It is shown that this is indeed the case for pure states but does
not hold in general for mixed states. We show that any set of $N$
orthogonal pure states can be perfectly distinguished by LOCC while
requiring no more than $\left(N-1\right)$ copies. This bound is perhaps
not tight and better bounds may be obtained by invoking clever measurement
strategies. Nonetheless it serves our purpose to demonstrate that
only finitely many copies are required to distinguish pure states
exactly.

On the other hand we show that a set of orthogonal quantum states,
not all of which are pure, may not be perfectly distinguished by LOCC
even if $n$ copies of the unknown state are available, where $n$
is finite but arbitrarily large. We explicitly construct examples
of such sets of states from unextendible product basis \cite{Bennett-II-99,Divincenzo-2003}.
Surprisingly such a set can be minimal, consisting of only two orthogonal
quantum states, one of which is necessarily mixed. Thus, mixed quantum
states can exhibit nonlocality in the many copy limit, a feature which
pure states do not possess. In some sense, this nonlocality, which
is similar to that of {}``nonlocality without entanglement'' \cite{Bennett-I-99}
and {}``more nonlocality with less entanglement'' \cite{Horodecki-2003},
is more robust in mixed states for it persists even in the domain
of multiple copies, whereas in case of pure states it does not. 

As noted before, the essence of this kind of quantum nonlocality lies
in the fact that it is not always possible to access the {}``which
state'' information locally even though the system was known to be
in one of several mutually orthogonal quantum states. Our result shows
that, in the many copy limit, a set of orthogonal states that cannot
be perfectly distinguished by LOCC is therefore relatively more nonlocal
than a set that can be perfectly distinguished. We now go a step further
and ask the following question: what is the strongest form of nonlocality
manifested in the setting of LOCC discrimination of orthogonal quantum
states? We answer this question by laying down a simple, yet physically
meaningful criterion and give examples of orthogonal quantum states
that satisfy it. 

What we propose is intuitively easy to understand and relies on the
notion of conclusive local distinguishability \cite{Duan2007,Nathanson-2005,Chefles2004,Bandyo+Walgate-2009,Virmanietal2001,ChenYang2002}.
Recall that in perfect LOCC discrimination, the {}``which state''
information is sought with certainty. However, if a set of states
is not perfectly distinguishable then the {}``which state'' information
may still be obtained locally with some nonzero probability $p>0$
and this is what conclusive state discrimination seeks to achieve.
We say that a set of quantum states is conclusively locally distinguishable
if and only if with some nonzero probability \emph{every} state can
be correctly identified by LOCC \cite{Bandyo+Walgate-2009,Chefles2004}.
This means that there is a LOCC protocol whereby with some nonzero
probability it can be determined in which state the system was \emph{certainly}
prepared. It is clear that this notion can be naturally extended to
the case of many copies. 

We say that a given set of bipartite (or multipartite) orthogonal
quantum states is said to be \emph{maximally indistinguishable} if
and only if the set is not conclusively locally distinguishable in
the many copy limit. The rationale behind is simple: if a set of states
is not conclusively locally distinguishable with multiple copies,
no matter how many are available, then the only way to correctly identify
\emph{every} state, either with a nonzero probability, or with certainty,
must involve using additional resources like entanglement. For any
other kind of ensemble the {}``which state'' information can always
be obtained in the many copy limit, either with probability one, or
with some nonzero probability and therefore these ensembles are relatively
more distinguishable. 

We begin by proving our result for pure states. 

\begin{thm} Any set of $N$ orthogonal pure states of a composite
quantum system can be perfectly distinguished by LOCC with at most
$(N-1)$ copies. \end{thm}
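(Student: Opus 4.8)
The plan is to reduce everything to the single-copy, two-state case: by the theorem of Walgate \emph{et al.} \cite{Walgate-2000}, any two orthogonal pure states $|\phi\rangle,|\chi\rangle$ of a composite system admit an LOCC protocol acting on one copy whose classical output correctly names the state whenever the system is promised to be in one of the two. I would then run $N-1$ such two-state discriminations in sequence, one per copy, arranging that each round eliminates exactly one candidate, so that after $N-1$ rounds a single candidate survives and must be the true state. The cases $N=1$ (no measurement needed) and $N=2$ (one copy, directly by \cite{Walgate-2000}) already match the claimed bound.

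Concretely, write the states as $|\psi_1\rangle,\dots,|\psi_N\rangle$ and suppose the parties hold $N-1$ copies of the unknown state $|\psi_t\rangle^{\otimes(N-1)}$. I would keep a candidate set $S\subseteq\{1,\dots,N\}$, starting from $S=\{1,\dots,N\}$. At round $k$ ($k=1,\dots,N-1$) pick any $i,j\in S$ and run, on the $k$-th copy, the Walgate protocol discriminating $|\psi_i\rangle$ from $|\psi_j\rangle$; if the output is ``$i$'' delete $j$ from $S$, and if it is ``$j$'' delete $i$. Each round shrinks $S$ by one, so after round $N-1$ the set $S$ is a singleton, which I output as the answer.

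For correctness I would argue two things. First, $t$ is never deleted: if $t\in\{i,j\}$ then at round $k$ the promise of the two-state subprotocol genuinely holds, since the $k$-th copy is in state $|\psi_t\rangle$, so its output is ``$t$'' and $t$ is kept; if $t\notin\{i,j\}$ then the deleted index is one of $i,j$, hence not $t$. Second, the $k$-th subprotocol acts only on the $k$-th copy, and because the global state factorises as $|\psi_t\rangle^{\otimes(N-1)}$, the copies $k+1,\dots,N-1$ are untouched and still equal to $|\psi_t\rangle$, so they remain available for the later rounds. That the composite procedure is LOCC is immediate, since a composition of LOCC protocols with classical feed-forward of intermediate outcomes is again LOCC.

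The step I expect to require the most care is the second correctness point, and the reason is conceptual rather than computational: one must insist on a fresh copy in every round — this is where the $N-1$ comes from — and think of each round as \emph{eliminating} a candidate rather than \emph{identifying} the state, because the two-state subprotocol carries no guarantee when its promise fails; the only conclusion one may safely draw from an output ``$i$'' is the always-valid statement ``the state is not $|\psi_j\rangle$''. I would close with a remark that $N-1$ is surely not tight — e.g. the Bell basis in $d\otimes d$ has $N=d^2$ elements yet is perfectly distinguished with only two copies \cite{Ghosh-2004} — and that sharper bounds should come from collective rather than sequential measurement strategies.
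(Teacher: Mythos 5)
Your proposal is correct and takes essentially the same route as the paper: measure one fresh copy per round, invoke the two-state theorem of Walgate \emph{et al.} \cite{Walgate-2000} to eliminate at least one candidate that provably cannot be the true state, and conclude after at most $N-1$ rounds. The only difference is presentational --- you treat the two-state protocol as a black box and make the elimination (contrapositive) logic explicit, whereas the paper unpacks the Walgate canonical form into an explicit local measurement; the substance of the argument is the same.
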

\begin{proof}
We prove it for bipartite case only. The extension to the multipartite
case is straightforward. Suppose the following set of orthogonal pure
states $\left\{ |\psi_{1}\rangle,|\psi_{2}\rangle,...,|\psi_{N}\rangle\right\} \in\mathcal{H}_{A}\otimes\mathcal{H}_{B}$
is not perfectly distinguishable by LOCC for a single copy and also
suppose that a finite number of copies of the unknown state are available.
The strategy is to measure each copy separately, one after the other.
As we will show, every round of measurement performed on a single
copy always succeeds in eliminating at least one state. That is, after
$k$ rounds of measurements on $k$ copies, at least $k$ states get
eliminated. Thus, the states can be perfectly distinguished using
at most $\left(N-1\right)$ copies. 

Using the result of Walgate et al \cite{Walgate-2000} we write the
states (unnormalized) as,\begin{eqnarray}
|\psi_{1}\rangle & = & \sum_{i=1}^{d_{A}}|i\rangle_{A}\otimes|\phi_{i}\rangle_{B},\label{proof-N-pure-eq1}\\
|\psi_{2}\rangle & = & \sum_{i=1}^{d_{A}}|i\rangle_{A}\otimes|\phi_{i}^{\perp}\rangle_{B},\label{proof-N-pure-eq2}\end{eqnarray}
where, $\langle\phi_{i}^{\perp}|\phi_{i}\rangle=0\;\forall i$. The
remaining $(N-2)$ states can be written as \begin{equation}
|\psi_{n}\rangle=\sum_{i=1}^{d_{A}}|i\rangle_{A}\otimes|\chi_{i}^{n}\rangle_{B}\;:n=3,...,N\label{proof-N-pure-eq3}\end{equation}
We assume that for a given $n$, the states $|\chi_{i}^{n}\rangle\,:i=1,...,d_{A}$
are not orthogonal to $\left\{ |\phi_{i}\rangle,|\phi_{i}^{\perp}\rangle\;:i=1,...,d_{A}\right\} $.
Also $\langle\phi_{i}^{\perp}|\phi_{j}^{\perp}\rangle\neq0$ and $\langle\phi_{i}|\phi_{j}\rangle\neq0$
for all $i$ and $j$. Notice that the first two states are written
in a canonical form such that they can always be distinguished by
LOCC \cite{Walgate-2000}. Now, for every $i\;:i=1,...,d_{A}$ consider
an orthogonal basis on Bob's side: $\mathcal{B}_{i}=\left\{ |\phi_{i}\rangle,|\phi_{i}^{\perp}\rangle,|\eta_{p}^{i}\rangle\;:p=3,...,d_{B}\right\} $.
One can therefore write the states $|\chi_{i}^{n}\rangle$ as,\begin{equation}
|\chi_{i}^{n}\rangle=a_{i}^{n}|\phi_{i}\rangle+b_{i}^{n}|\phi_{i}^{\perp}\rangle+\sum_{p=3}^{d_{B}}c_{i,p}^{n}|\eta_{p}^{i}\rangle\;:\forall i,n\label{proof-N-pure-eq4}\end{equation}
where we assume that the coefficients $\left\{ a_{i}^{n},b_{i}^{n},c_{i,p}^{n}\right\} $
are all non-zero. The rest of the proof goes as follows. Alice performs
a measurement in the basis $\left\{ |i\rangle_{A}\;:i=1,...,d_{A}\right\} $.
For the $k^{th}$ outcome on Alice's side, Bob performs a measurement
in the basis $\mathcal{B}_{k}$. If neither $|\phi_{k}\rangle$ nor
$|\phi_{k}^{\perp}\rangle$ is obtained then the measurement eliminates
the first two states from contention. However, if either $|\phi_{k}\rangle$
or $|\phi_{k}^{\perp}\rangle$ is the outcome, then only one state
gets eliminated. In this case, the state which is eliminated is either
$|\psi_{2}\rangle$ (if the the outcome is $|\phi_{k}\rangle$) or
$|\psi_{1}\rangle$ (if the outcome is $|\phi_{k}^{\perp}\rangle$.
Thus $(N-1)$ states are left to distinguish after round one {[}or
$(N-2)$ depending on Bob's outcome{]} performed on the first copy. 

The same strategy is followed to perform next round of measurements
on the second copy to distinguish the remaining $\left(N-1\right)$
states (or $(N-2)$). Once again, we select any two states and write
them in the canonical form of Eqns.$\,$(\ref{proof-N-pure-eq1})
and (\ref{proof-N-pure-eq2}) and the remaining states in the form
of Eq.$\,$(\ref{proof-N-pure-eq3}). Alice and Bob perform the suitable
measurements to eliminate at least one state and the protocol continues.
In the worst case scenario, only one state gets eliminated in every
round at the expense of one copy of the state whose identity we are
trying to determine. Thus the states $\left\{ |\psi_{1}\rangle,|\psi_{2}\rangle,...,|\psi_{N}\rangle\right\} $
can be perfectly distinguished by LOCC with at most $(N-1)$ copies. 
\end{proof}
Before we prove the results for mixed states it is necessary to review
the relevant notions of LOCC distinguishability. We assume that we
are dealing with finite dimensional quantum systems. 

A positive semidefinite operator acting on $\mathcal{H}=\mathcal{H}_{A}\otimes\mathcal{H}_{B}$
is said to be separable if its normalized form is a separable quantum
state. A separable measurement $\Pi=\left\{ \Pi_{1},\Pi_{2},\cdots,\Pi_{n}\right\} $
on $\mathcal{H}$ is a POVM satisfying $\sum_{i=1}^{n}\Pi_{i}=\mathcal{I_{H}}$,
where $\Pi_{i}$ is a separable, positive semidefinite operator for
every $i$. Note that any measurement realized by LOCC is separable,
while the converse is not true \cite{Bennett-I-99}. To show that
a set of states is locally distinguishable it is necessary to demonstrate
it by an explicit protocol because states that are perfectly distinguishable
by separable operations but not by LOCC do exist \cite{Bennett-I-99}.
On the other hand, to show that a set of states is not perfectly distinguishable
by LOCC it suffices to show that the necessary conditions \cite{Hayashi-etal-2006,Watrous-2005,DuanFeng2007}
stated in the following proposition are violated. 

\begin{prop} If a set of orthogonal quantum states $\{\rho_{1},\rho_{2},...,\rho_{n}\}$
is perfectly distinguishable by LOCC then it is necessary that there
exists a separable POVM $\Pi=\left\{ \Pi_{1},\Pi_{2},\cdots,\Pi_{n}\right\} $
such that \begin{equation}
\mbox{Tr}\left(\Pi_{i}\rho_{j}\right)=\delta_{ij}.\label{prop1-eq-1}\end{equation}
 \end{prop}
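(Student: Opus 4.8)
The plan is to deduce the proposition directly from the structural fact about LOCC already recalled above, namely that every measurement implementable by LOCC is a separable measurement. Concretely, I would argue that a perfect LOCC discrimination protocol, once its classical record is coarse-grained according to which state is finally declared, \emph{is} a separable POVM satisfying (\ref{prop1-eq-1}), so the proposition reduces to unpacking definitions.

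First I would set up the induced POVM. A finitely terminating LOCC protocol is an alternating sequence of local measurements, each conditioned on the classical outcomes broadcast so far; its branches are labelled by the leaves $\ell$ of a finite tree, and the composition of the local Kraus operators along the path to $\ell$ has product form $M_\ell = A_\ell \otimes B_\ell$, with $A_\ell$ acting on $\mathcal{H}_A$ and $B_\ell$ on $\mathcal{H}_B$. Let $g(\ell)\in\{1,\dots,n\}$ denote the state index that the protocol outputs at leaf $\ell$, and define $\Pi_i = \sum_{\ell:\,g(\ell)=i} M_\ell^\dagger M_\ell = \sum_{\ell:\,g(\ell)=i} (A_\ell^\dagger A_\ell)\otimes(B_\ell^\dagger B_\ell)$. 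Each term $(A_\ell^\dagger A_\ell)\otimes(B_\ell^\dagger B_\ell)$ is a positive semidefinite product operator, hence separable, and a sum of separable operators is separable; moreover $\sum_{i}\Pi_i = \sum_{\ell} M_\ell^\dagger M_\ell = I$ by trace preservation of the protocol. Thus $\Pi = \{\Pi_1,\dots,\Pi_n\}$ is a separable POVM.

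Next I would verify the orthogonality relation. If the system is prepared in $\rho_j$, then the probability that the protocol declares the outcome $i$ is exactly $\mathrm{Tr}(\Pi_i \rho_j)$. Perfect LOCC discrimination means precisely that for every $j$ this declared outcome is correct with certainty, i.e. the probability of declaring $i$ is $1$ when $i=j$ and $0$ otherwise, which is the statement $\mathrm{Tr}(\Pi_i\rho_j)=\delta_{ij}$ of (\ref{prop1-eq-1}).

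The only point requiring care is the passage from an arbitrary LOCC protocol to one with the finite, product-Kraus-operator structure used above, since in general an LOCC protocol may use unboundedly many rounds. I expect this to be the main (though mild) obstacle, and I would dispose of it in the standard way: either invoke the known fact that the closure of LOCC is contained in the set of separable measurements, or observe that a protocol achieving \emph{perfect} discrimination can be taken to halt after finitely many rounds, and otherwise pass to a limit of the operators $\Pi_i$ using that the cone of separable operators is closed. With that reduction in hand, the argument above is routine and the proposition follows.
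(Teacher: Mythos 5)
Your proposal is correct: the paper states this proposition without proof, simply citing the literature, and your argument --- coarse-graining the product-form Kraus operators of an LOCC protocol into a separable POVM and reading off $\mbox{Tr}\left(\Pi_{i}\rho_{j}\right)=\delta_{ij}$ from the definition of perfect discrimination --- is precisely the standard derivation underlying those references. Your handling of the unbounded-round subtlety via closedness of the separable cone is the right way to make the reduction airtight.
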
 

Recall that a set of states is said to be conclusively locally distinguishable
if and only if with some nonzero probability \emph{$p>0$ every} state
can be correctly identified by LOCC \cite{Bandyo+Walgate-2009,Chefles2004}.
The following proposition provides a necessary condition \cite{Chefles2004,Bandyo+Walgate-2009}.

\begin{prop} If a set of orthogonal quantum states $\{\rho_{1},\rho_{2},...,\rho_{n}\}$
is conclusively locally distinguishable by LOCC then it is necessary
that for every $i$ there exists a product state $|\phi_{i}\rangle$
such that $\forall j\neq i$ $\langle\phi_{i}|\rho_{j}|\phi_{i}\rangle=0$
and $\langle\phi_{i}|\rho_{i}|\phi_{i}\rangle\neq0$. \end{prop}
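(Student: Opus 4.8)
The plan is to translate ``conclusive local distinguishability'' into a statement about POVM elements and then decompose those elements down to rank-one product projectors. Recall that a finite LOCC protocol has the structure of a tree: the parties alternate local measurements, each conditioned on the outcomes communicated so far, and each leaf carries a label, here either ``inconclusive'' or ``conclude $\rho_i$''. Along any single branch the accumulated Kraus operator is a product $A\otimes B$, since Alice's and Bob's local operators act on different tensor factors and hence commute; consequently the effect attached to that branch is the positive product operator $A^\dagger A\otimes B^\dagger B$. I would let $E_i$ be the sum of these product operators over all branches whose leaf announces ``$\rho_i$'', so that $\mathrm{Tr}(E_i\rho_j)$ is exactly the probability that the protocol announces $i$ when the true state is $\rho_j$.

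First I would record the two defining features of conclusiveness: correctness of an announcement forces $\mathrm{Tr}(E_i\rho_j)=0$ for every $j\neq i$ (otherwise announcing ``$i$'' would not certify that the state is $\rho_i$), while the demand that every state be identifiable with nonzero probability forces $\mathrm{Tr}(E_i\rho_i)>0$. Since $E_i$ is a sum of positive operators and each $\rho_j$ is positive, the vanishing $\mathrm{Tr}(E_i\rho_j)=0$ forces each branch term to annihilate $\rho_j$ in trace, and $\mathrm{Tr}(E_i\rho_i)>0$ guarantees a single branch term $P_A\otimes P_B$ with $\mathrm{Tr}\big((P_A\otimes P_B)\rho_i\big)>0$ while still $\mathrm{Tr}\big((P_A\otimes P_B)\rho_j\big)=0$ for all $j\neq i$. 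Diagonalizing $P_A=\sum_k\lambda_k|a_k\rangle\langle a_k|$ and $P_B=\sum_l\mu_l|b_l\rangle\langle b_l|$ with $\lambda_k,\mu_l\geq 0$ writes $P_A\otimes P_B$ as a nonnegative combination of the rank-one product projectors $|a_k b_l\rangle\langle a_k b_l|$, where $|a_k b_l\rangle:=|a_k\rangle\otimes|b_l\rangle$. Applying positivity once more yields $\langle a_k b_l|\rho_j|a_k b_l\rangle=0$ for every $(k,l)$ with $\lambda_k\mu_l>0$ and every $j\neq i$, while at least one such pair $(k,l)$ has $\langle a_k b_l|\rho_i|a_k b_l\rangle>0$; taking $|\phi_i\rangle$ to be that product vector $|a_k b_l\rangle$ gives precisely the asserted conditions, and repeating for each $i$ finishes the argument. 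The multipartite case is identical, with branch operators being tensor products over all parties.

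The step deserving the most care, and what I would flag as the main obstacle, is the twofold reduction to a single rank-one product vector. Both passages, from $E_i$ to one branch and from one branch to one rank-one term, rest on the elementary fact that for $X\geq 0$ and $\rho\geq 0$ one has $\mathrm{Tr}(X\rho)=0$ iff $\mathrm{supp}(\rho)\subseteq\ker X$, so that orthogonality to the $\rho_j$ is inherited term by term while non-orthogonality to $\rho_i$ survives for at least one term. A secondary point worth a sentence is that the success branch really is finite, since its leaf announces ``$\rho_i$'' after finitely many rounds with positive probability, so its accumulated operator is genuinely a finite product $A\otimes B$ and not merely a limit, which is what makes the product structure available in the first place.
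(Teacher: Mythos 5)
Your argument is correct. Note that the paper itself states this proposition without proof, citing \cite{Chefles2004} and \cite{Bandyo+Walgate-2009}, so there is no in-text argument to compare against; what you have written is essentially the standard derivation from those references, and it fills the gap soundly. The chain of reductions is valid: the accumulated Kraus operator of any finite branch is indeed of the form $A\otimes B$ because Alice's and Bob's local operators act on distinct tensor factors, so each branch effect $A^{\dagger}A\otimes B^{\dagger}B$ is a positive product operator; zero-error announcement gives $\mathrm{Tr}(E_{i}\rho_{j})=0$ for $j\neq i$, and since every term in $E_{i}$ pairs nonnegatively with every $\rho_{j}$, the vanishing trace passes to each branch term and then, after diagonalizing $P_{A}$ and $P_{B}$, to each rank-one product projector with nonzero weight, while $\mathrm{Tr}(E_{i}\rho_{i})>0$ guarantees at least one such projector survives as the desired $|\phi_{i}\rangle\langle\phi_{i}|$. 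Two small remarks. First, your proof actually establishes something slightly stronger than what the paper needs: the condition is necessary for conclusive discrimination by any \emph{separable} measurement, not just LOCC, since the only structural input is that each POVM element decomposes as a nonnegative sum of product operators; this is consistent with how the paper treats Proposition 1. Second, your closing caveat about finiteness is handled correctly --- the total success probability on $\rho_{i}$ is a countable sum over leaves, so positivity of the sum forces positivity of at least one finite-depth leaf's contribution, which is all the argument uses.
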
 

\begin{remark} To prove that a set of orthogonal states is not conclusively
locally distinguishable, it suffices to show that there exists at
least one state which cannot be correctly identified with a nonzero
probability by LOCC. \end{remark}

It is obvious that if a set is not conclusively locally distinguishable
then it cannot be perfectly distinguished. The converse is not true
in general. For instance, the set of product states exhibiting nonlocality
without entanglement \cite{Bennett-I-99} is not perfectly distinguishable
by LOCC but clearly conclusively locally distinguishable in accordance
with the above proposition. On the other hand, the set of four Bell
states in $2\otimes2$, is neither perfectly distinguishable by LOCC
nor conclusively locally distinguishable. Note that the states in
the aforementioned examples are all perfectly distinguishable in the
many copy limit by Theorem 1. 

In what follows we give examples of bipartite orthogonal density matrices
that cannot be conclusively distinguished with multiple copies by
LOCC which immediately implies that they cannot be perfectly distinguished
either. The examples are obtained using the concept of unextendible
product basis \cite{Bennett-II-99,Divincenzo-2003}. 

A UPB is an orthogonal product basis on $\mathcal{H}=\mathcal{H}_{A}\otimes\mathcal{H}_{B}$
spanning a subspace $\mathcal{S}$ of $\mathcal{H}$ such that its
complementary subspace $\mathcal{S}^{\perp}$ contains no product
state. Note that by definition $\mathcal{S}\oplus\mathcal{S}^{\perp}=\mathcal{H}$.
Let $\mathcal{S}$ be a UPB on $\mathcal{H}=\mathcal{H}_{A}\otimes\mathcal{H}_{B}$
and $\mathcal{S}^{\perp}$ be its complementary subspace. Let $\sigma$
be the normalized projector onto the subspaces $\mathcal{S}$, and
$\rho$ be any quantum state belonging to the subspace $\mathcal{S}^{\perp}$. 

\begin{lem} The orthogonal density matrices $\left\{ \sigma,\rho\right\} $
are not perfectly distinguishable by LOCC. Moreover they are not conclusively
locally distinguishable. \end{lem}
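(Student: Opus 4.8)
The plan is to derive both assertions from Proposition 3 together with the Remark, treating the non--conclusive--distinguishability claim as the substantive one since, as already noted in the text, it implies that $\left\{\sigma,\rho\right\}$ cannot be perfectly distinguished by LOCC. First I would record that $\sigma$ and $\rho$ are genuine, mutually orthogonal density matrices: $\sigma=P_{\mathcal{S}}/\dim\mathcal{S}$ is supported on $\mathcal{S}$, $\rho$ is supported on $\mathcal{S}^{\perp}$, and since $\mathcal{S}\oplus\mathcal{S}^{\perp}=\mathcal{H}$ we have $\mathrm{Tr}\left(\sigma\rho\right)=0$.

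The heart of the argument is to show that the state $\rho$ cannot be conclusively identified by LOCC, which by the Remark already forces $\left\{\sigma,\rho\right\}$ not to be conclusively locally distinguishable. Suppose it were. Applying Proposition 3 with $\rho_{1}=\sigma$, $\rho_{2}=\rho$ and $i=2$, there would have to exist a product state $|\phi\rangle\in\mathcal{H}_{A}\otimes\mathcal{H}_{B}$ with $\langle\phi|\sigma|\phi\rangle=0$ and $\langle\phi|\rho|\phi\rangle\neq0$. The key step is to read off what the first equality means: because $\sigma$ is proportional to the \emph{full} projector onto $\mathcal{S}$, $\langle\phi|\sigma|\phi\rangle=0$ holds if and only if $|\phi\rangle$ is orthogonal to every vector in $\mathcal{S}$, i.e. $|\phi\rangle\in\mathcal{S}^{\perp}$. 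Thus $|\phi\rangle$ would be a product state lying in $\mathcal{S}^{\perp}$, contradicting the defining property of a UPB, namely that $\mathcal{S}^{\perp}$ contains no product state. Hence no such $|\phi\rangle$ exists, $\rho$ is not conclusively locally identifiable, $\left\{\sigma,\rho\right\}$ is not conclusively locally distinguishable, and a fortiori it is not perfectly distinguishable by LOCC.

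I would also check, for completeness, that the index $i=1$ poses no obstruction, so that the contradiction genuinely rests on the UPB property: taking $|\phi_{1}\rangle$ to be any member of the UPB gives $\langle\phi_{1}|\sigma|\phi_{1}\rangle=1/\dim\mathcal{S}\neq0$ and $\langle\phi_{1}|\rho|\phi_{1}\rangle=0$, so Proposition 3 is satisfied at $i=1$; it is precisely and only the mixed state $\rho$ that violates it. I do not expect a serious obstacle in the single--copy statement as written; the one point needing care is the equivalence ``$|\phi\rangle$ product with $\langle\phi|\sigma|\phi\rangle=0$'' $\Leftrightarrow$ ``$|\phi\rangle$ is a product vector in $\mathcal{S}^{\perp}$'', which uses crucially that $\sigma$ is supported on \emph{all} of $\mathcal{S}$ rather than on a proper subspace. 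The genuinely hard part lies in the many--copy strengthening anticipated in the surrounding discussion: there one must run the same argument for $\sigma^{\otimes m}$ and $\rho^{\otimes m}$ across the $\mathcal{H}_{A}^{\otimes m}\,|\,\mathcal{H}_{B}^{\otimes m}$ cut, which reduces to showing that the orthocomplement of $\mathcal{S}^{\otimes m}$ still contains no state that is product with respect to that bipartition — the place where an extra ingredient about tensor powers of the UPB subspace is needed.
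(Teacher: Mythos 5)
Your argument is correct and is essentially identical to the paper's own proof: invoke the necessary condition for conclusive local discrimination (Proposition 2 in the paper's numbering) to produce a product state $|\phi\rangle$ with $\langle\phi|\sigma|\phi\rangle=0$, observe that this forces $|\phi\rangle\in\mathcal{S}^{\perp}$ because $\sigma$ is supported on all of $\mathcal{S}$, and contradict the unextendibility of the UPB. Your additional check that $\sigma$ itself satisfies the condition via any member of the UPB is a sound (and correct) aside, and your closing remark about the tensor-power case matches exactly how the paper handles it later, via the lemma that a tensor product of UPBs is again a UPB.
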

\begin{proof}
It suffices to prove that the states are not conclusively locally
distinguishable. We will show that the state $\rho$ cannot be correctly
identified with nonzero probability by LOCC. Suppose there exists
a LOCC protocol whereby $\rho$ can be conclusively distinguished.
Then there exists a product state $|\phi\rangle$ in accordance with
Proposition 2 such that the following two equations must be satisfied:
\begin{eqnarray}
\langle\phi|\rho|\phi\rangle & \neq & 0\label{lemma-1-eq-1}\\
\langle\phi|\sigma|\phi\rangle & = & 0\label{lemma-1-eq-2}\end{eqnarray}
 Noting that $\sigma$ is the normalized projector onto the subspace
$\mathcal{S}$, from Eq.$\,$(\ref{lemma-1-eq-2}) it follows that
the product state $|\phi\rangle$ must lie in the subspace $\mathcal{S}^{\perp}$.
This is in contradiction with the fact that $\mathcal{S}^{\perp}$
contains no product state. This concludes the proof. 
\end{proof}
We now consider local distinguishability of the states $\left\{ \sigma,\rho\right\} $
in the many copy scenario. That is, we would like to know whether
the states $\left\{ \sigma{}^{\otimes n},\rho^{\otimes n}\right\} \in\mathcal{H}^{\otimes n}=\mathcal{H}_{A}^{\otimes n}\otimes\mathcal{H}_{B}^{\otimes n}$
can be perfectly distinguished by LOCC for some $n\geq2$. To begin
with we note that the tensor product of any two bipartite UPBs is
again a UPB. 

\begin{lem} \cite{Divincenzo-2003} Let $\mathcal{S}_{1}$ and $\mathcal{S}_{2}$
are UPBs on $\mathcal{H}_{1}$ and $\mathcal{H}_{2}$ respectively.
Then $\mathcal{S}_{1}\otimes\mathcal{S}_{2}$ ia a UPB on $\mathcal{H}_{1}\otimes\mathcal{H}_{2}$.
\end{lem}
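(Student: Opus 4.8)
The plan is to verify the two defining properties of a UPB for $\mathcal{S}_1\otimes\mathcal{S}_2$ on $\mathcal{H}_1\otimes\mathcal{H}_2$. I read $\mathcal{H}_1=\mathcal{H}_{1A}\otimes\mathcal{H}_{1B}$ and $\mathcal{H}_2=\mathcal{H}_{2A}\otimes\mathcal{H}_{2B}$, and treat $\mathcal{H}_1\otimes\mathcal{H}_2$ as bipartite with Alice holding $\mathcal{H}_{1A}\otimes\mathcal{H}_{2A}$ and Bob holding $\mathcal{H}_{1B}\otimes\mathcal{H}_{2B}$; write the members as $|u_i\rangle=|u_i^A\rangle|u_i^B\rangle\in\mathcal{S}_1$ and $|v_j\rangle=|v_j^A\rangle|v_j^B\rangle\in\mathcal{S}_2$. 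First, each element of $\mathcal{S}_1\otimes\mathcal{S}_2$ equals $(|u_i^A\rangle|v_j^A\rangle)\otimes(|u_i^B\rangle|v_j^B\rangle)$, a product vector across the Alice$\mid$Bob cut, and these are plainly orthonormal, so $\mathcal{S}_1\otimes\mathcal{S}_2$ is an orthogonal product basis of the subspace it spans. The real work is to show that $(\mathcal{S}_1\otimes\mathcal{S}_2)^{\perp}$ contains no product state; the point to appreciate is that a state which is a product across Alice$\mid$Bob may well be entangled across the $\mathcal{H}_1\mid\mathcal{H}_2$ cut, so this does not follow merely from the basis vectors being fourfold products.

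I would argue by contradiction. Suppose $|\Psi\rangle=|\alpha\rangle_A|\beta\rangle_B\neq0$ lies in $(\mathcal{S}_1\otimes\mathcal{S}_2)^{\perp}$, with $0\neq|\alpha\rangle\in\mathcal{H}_{1A}\otimes\mathcal{H}_{2A}$ and $0\neq|\beta\rangle\in\mathcal{H}_{1B}\otimes\mathcal{H}_{2B}$. Orthogonality to every $|u_i\rangle\otimes|v_j\rangle$ reads $\langle u_i^A,v_j^A|\alpha\rangle\,\langle u_i^B,v_j^B|\beta\rangle=0$ for all $i,j$. Fix $j$ and set $|\alpha_j\rangle:=\langle v_j^A|\alpha\rangle\in\mathcal{H}_{1A}$ and $|\beta_j\rangle:=\langle v_j^B|\beta\rangle\in\mathcal{H}_{1B}$; the identity then becomes $\langle u_i|\bigl(|\alpha_j\rangle\otimes|\beta_j\rangle\bigr)=0$ for every $i$, i.e. $|\alpha_j\rangle\otimes|\beta_j\rangle$ is a product vector orthogonal to $\mathcal{S}_1$, hence lies in $\mathcal{S}_1^{\perp}$. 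Since $\mathcal{S}_1$ is a UPB, $\mathcal{S}_1^{\perp}$ contains no product state, so $|\alpha_j\rangle\otimes|\beta_j\rangle=0$; that is, for each $j$ either $|\alpha_j\rangle=0$ or $|\beta_j\rangle=0$.

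Next I would convert this into a statement about $\mathcal{S}_2^{\perp}$. Let $X\subseteq\mathcal{H}_{2A}$ be the marginal support of $|\alpha\rangle$ on the second Alice factor and $Y\subseteq\mathcal{H}_{2B}$ the marginal support of $|\beta\rangle$ on the second Bob factor; both are nonzero since $|\alpha\rangle,|\beta\rangle\neq0$. From the Schmidt decomposition, $|\alpha_j\rangle=\langle v_j^A|\alpha\rangle=0$ iff $|v_j^A\rangle\perp X$, and likewise $|\beta_j\rangle=0$ iff $|v_j^B\rangle\perp Y$, so the previous paragraph says: for every $j$, $|v_j^A\rangle\perp X$ or $|v_j^B\rangle\perp Y$. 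Hence for arbitrary $|x\rangle\in X$ and $|y\rangle\in Y$ we get $\langle v_j|\bigl(|x\rangle\otimes|y\rangle\bigr)=\langle v_j^A|x\rangle\langle v_j^B|y\rangle=0$ for all $j$, so $X\otimes Y\subseteq\mathcal{S}_2^{\perp}$. Choosing nonzero $|x\rangle\in X$ and $|y\rangle\in Y$ exhibits a product state in $\mathcal{S}_2^{\perp}$, contradicting the hypothesis that $\mathcal{S}_2$ is a UPB. Therefore no such $|\Psi\rangle$ exists, $(\mathcal{S}_1\otimes\mathcal{S}_2)^{\perp}$ contains no product state, and $\mathcal{S}_1\otimes\mathcal{S}_2$ is a UPB.

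The main obstacle is conceptual rather than computational: one must recognize that the right manoeuvre is to contract the hypothetical product state against the \emph{local halves} of the members of one UPB so as to drop into the complement of the \emph{other} UPB — after which ``$\mathcal{S}_1^{\perp}$ has no product state'' instantly kills each contracted vector, and a short covering argument over the index $j$ lands the product-rich subspace $X\otimes Y$ inside $\mathcal{S}_2^{\perp}$. The only routine ingredient is the elementary fact that for $0\neq|\alpha\rangle\in\mathcal{H}_{1A}\otimes\mathcal{H}_{2A}$ the partial overlap $\langle v|_{2A}|\alpha\rangle$ vanishes exactly when $|v\rangle$ is orthogonal to the marginal support of $|\alpha\rangle$, which is immediate from the Schmidt form; I would state it once and reuse it.
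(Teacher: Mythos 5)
Your proof is correct and complete. Note that the paper itself offers no proof of this lemma --- it is stated with a citation to DiVincenzo et al.\ \cite{Divincenzo-2003} and used as a black box --- so there is nothing in the text to compare against; your argument is essentially the standard one from that reference. You correctly isolate the genuine content (a state that is product across the Alice$\mid$Bob cut of $\mathcal{H}_1\otimes\mathcal{H}_2$ need not be product across the $\mathcal{H}_1\mid\mathcal{H}_2$ cut, so the claim does not reduce trivially to the single-copy case), and both key steps check out: contracting $|\alpha\rangle\otimes|\beta\rangle$ against the local halves $|v_j^A\rangle,|v_j^B\rangle$ produces a product vector in $\mathcal{S}_1^{\perp}$, which must vanish, and the resulting dichotomy over $j$ places the nonzero product subspace $X\otimes Y$ inside $\mathcal{S}_2^{\perp}$, contradicting unextendibility of $\mathcal{S}_2$. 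The supporting fact that $\langle v|\alpha\rangle=0$ iff $|v\rangle$ is orthogonal to the marginal support of $|\alpha\rangle$ is correctly justified from the Schmidt form.
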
 

Therefore, $\mathcal{S}^{\otimes n}$ is also a UPB on $\mathcal{H}^{\otimes n}=\mathcal{H}_{A}^{\otimes n}\otimes\mathcal{H}_{B}^{\otimes n}$
which in turn implies that its orthogonal complement $\left(\mathcal{S}^{\otimes n}\right)^{\perp}$
does not contain any product state. 

\begin{lem} The orthogonal density matrices $\left\{ \sigma^{\otimes n},\rho^{\otimes n}\right\} $
are not conclusively distinguishable by LOCC for any finite $n$.
\end{lem}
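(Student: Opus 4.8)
The plan is to repeat the proof of Lemma 1 almost verbatim, but now over the enlarged Hilbert space $\mathcal{H}^{\otimes n}=\mathcal{H}_{A}^{\otimes n}\otimes\mathcal{H}_{B}^{\otimes n}$, with Alice holding $\mathcal{H}_{A}^{\otimes n}$ and Bob holding $\mathcal{H}_{B}^{\otimes n}$. First I would record the two elementary facts that make Proposition 2 applicable: (i) the two density matrices are orthogonal, since $\mathrm{Tr}\!\left(\sigma^{\otimes n}\rho^{\otimes n}\right)=\left(\mathrm{Tr}(\sigma\rho)\right)^{n}=0$; and (ii) $\sigma^{\otimes n}$ is precisely the normalized projector onto $\mathcal{S}^{\otimes n}$, while $\rho^{\otimes n}$ is supported on $\left(\mathcal{S}^{\perp}\right)^{\otimes n}\subseteq\left(\mathcal{S}^{\otimes n}\right)^{\perp}$.

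Next, by the Remark it suffices to exhibit one state of the pair that cannot be identified by LOCC with any nonzero probability; I take that state to be $\rho^{\otimes n}$. Suppose, for contradiction, that some LOCC protocol conclusively distinguishes $\left\{\sigma^{\otimes n},\rho^{\otimes n}\right\}$. Applying Proposition 2 with $i$ the index of $\rho^{\otimes n}$, there must exist a vector $|\Phi\rangle$, product across the $\mathcal{H}_{A}^{\otimes n}\,|\,\mathcal{H}_{B}^{\otimes n}$ cut, satisfying $\langle\Phi|\rho^{\otimes n}|\Phi\rangle\neq 0$ and $\langle\Phi|\sigma^{\otimes n}|\Phi\rangle=0$. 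Because $\sigma^{\otimes n}$ is the normalized projector onto $\mathcal{S}^{\otimes n}$, the second identity forces $|\Phi\rangle\in\left(\mathcal{S}^{\otimes n}\right)^{\perp}$. But $\mathcal{S}^{\otimes n}$ is a UPB on $\mathcal{H}_{A}^{\otimes n}\otimes\mathcal{H}_{B}^{\otimes n}$ by Lemma 2 (equivalently, by the remark immediately preceding this statement), so $\left(\mathcal{S}^{\otimes n}\right)^{\perp}$ contains no product state with respect to the $A^{\otimes n}\,|\,B^{\otimes n}$ bipartition, contradicting the existence of $|\Phi\rangle$. Hence no such protocol exists, which is the claim.

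The single point that needs care — and the step where Lemma 2 does the real work — is the matching of bipartitions. In the $n$-copy scenario ``local'' refers to the single Alice/Bob split $\mathcal{H}_{A}^{\otimes n}\,|\,\mathcal{H}_{B}^{\otimes n}$, so the product vector handed to us by Proposition 2 need only factor across that coarse cut and not across the individual copies; what closes the argument is exactly that $\mathcal{S}^{\otimes n}$ remains unextendible with respect to that same coarse cut, which is precisely the content of Lemma 2. Beyond keeping this bookkeeping straight I do not anticipate any genuine obstacle; the remainder of the argument is the structure of the proof of Lemma 1.
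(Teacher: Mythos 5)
Your proof is correct and follows essentially the same route as the paper: invoke Proposition 2 for $\rho^{\otimes n}$, deduce that the required product state would have to lie in $\left(\mathcal{S}^{\otimes n}\right)^{\perp}$, and contradict this with Lemma 2's guarantee that $\mathcal{S}^{\otimes n}$ is a UPB across the $A^{\otimes n}\,|\,B^{\otimes n}$ cut. Your explicit remark about matching the bipartition is a worthwhile clarification of exactly the point where Lemma 2 is doing the work.
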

\begin{proof}
We first note that $\sigma^{\otimes n}$ is the normalized projector
onto the UPB subspace $\mathcal{S}^{\otimes n}$. Then the orthogonality
condition $\sigma^{\otimes n}\perp\rho^{\otimes n}$ implies that
 the state $\rho^{\otimes n}$ belongs to the subspace $\left(\mathcal{S}^{\otimes n}\right)^{\perp}$.
Now suppose that the states can be conclusively locally distinguished.
Therefore, both $\sigma^{\otimes n}$ and $\rho^{\otimes n}$ can
be correctly identified with some nonzero probability by LOCC. For
$\rho^{\otimes n}$ this means, there exist a product state $|\eta\rangle$
in accordance with Proposition 2 satisfying the following equations:
\begin{eqnarray}
\langle\eta|\rho^{\otimes n}|\eta\rangle & \neq & 0\label{lemma-3-eq-1}\\
\langle\eta|\sigma^{\otimes n}|\eta\rangle & = & 0\label{lemma-3-eq-2}\end{eqnarray}
 By noting that $\sigma^{\otimes n}$ is the normalized projector
onto the subspace $\mathcal{S}^{\otimes n}$, from Eq.$\,$(\ref{lemma-3-eq-2})
it follows that the product state $|\eta\rangle$ must belong to the
subspace $\left(\mathcal{S}^{\otimes n}\right)^{\perp}$. However
the subspace $\left(\mathcal{S}^{\otimes n}\right)^{\perp}$ being
the orthogonal complement of the UPB subspace $\mathcal{S}^{\otimes n}$
does not contain any product state. This holds for any finite $n$.
Hence the proof. 
\end{proof}
Let us emphasize that the state $\sigma^{\otimes n}$ can be correctly
identified with some nonzero probability by LOCC. However, if the
system was prepared in the state $\rho^{\otimes n}$, then there is
no LOCC protocol whereby with some nonzero probability it can be determined
that the system was certainly prepared in $\rho^{\otimes n}$. As
is evident, the concept of UPB and its very special properties are
extremely crucial for our results. 

\begin{thm} Any bipartite orthogonal ensemble which contains $\sigma$,
the normalized projector onto a UPB subspace, is conclusively locally
indistinguishable in the many copy limit. \end{thm}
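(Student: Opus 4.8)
The plan is to derive Theorem 2 from essentially the same argument that already proves Lemma~3, by noticing that conclusive local distinguishability of the whole ensemble would in particular force each non-$\sigma$ member to be locally identifiable, and that this weaker requirement already fails because of the UPB property. Write the ensemble as $\{\sigma,\rho_2,\dots,\rho_k\}$ with $k\ge 2$ (a one-element ``ensemble'' is trivially distinguishable, so there is nothing to prove in that degenerate case), all states mutually orthogonal, with $\sigma$ the normalized projector onto a UPB subspace $\mathcal{S}\subset\mathcal{H}_A\otimes\mathcal{H}_B$. Note that orthogonality $\mathrm{Tr}(\sigma\rho_j)=0$ automatically places the support of each $\rho_j$ inside $\mathcal{S}^{\perp}$, though this fact will not even be needed.

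First I would pass to $n$ copies and record the two facts we need about the $n$-copy system $\mathcal{H}_A^{\otimes n}\otimes\mathcal{H}_B^{\otimes n}$: (i) iterating Lemma~2 (applied inductively to the factors $\mathcal{H}$ and $\mathcal{H}^{\otimes(n-1)}$), $\mathcal{S}^{\otimes n}$ is a UPB on this space, so its orthogonal complement $(\mathcal{S}^{\otimes n})^{\perp}$ contains no product state across the $\mathcal{H}_A^{\otimes n}:\mathcal{H}_B^{\otimes n}$ cut; and (ii) $\sigma^{\otimes n}$ is precisely the normalized projector onto $\mathcal{S}^{\otimes n}$.

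Next I would invoke the Remark: to prove that $\{\sigma^{\otimes n},\rho_2^{\otimes n},\dots,\rho_k^{\otimes n}\}$ is not conclusively locally distinguishable it suffices to exhibit a single member that cannot be correctly identified with nonzero probability by LOCC. I would take $\rho_2^{\otimes n}$ (any non-$\sigma$ member works identically). Assume for contradiction that it can be conclusively identified; then Proposition~2 supplies a product state $|\eta\rangle$ with $\langle\eta|\rho_2^{\otimes n}|\eta\rangle\neq 0$ and, among the vanishing conditions, $\langle\eta|\sigma^{\otimes n}|\eta\rangle=0$. By fact (ii) the latter forces $|\eta\rangle\in(\mathcal{S}^{\otimes n})^{\perp}$, contradicting fact (i). Since $n$ was arbitrary, no finite number of copies makes the ensemble conclusively locally distinguishable, i.e.\ it is conclusively locally indistinguishable in the many copy limit, and hence also not perfectly distinguishable by LOCC.

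There is no real obstacle here; the one point to get right rather than an actual difficulty is bookkeeping about the bipartition: the notion of ``product state'' relevant both to Proposition~2 and to the UPB property for the $n$-copy problem is the split $\mathcal{H}_A^{\otimes n}:\mathcal{H}_B^{\otimes n}$, and it is exactly Lemma~2 (used inductively) that certifies $\mathcal{S}^{\otimes n}$ as a UPB across that cut. Everything else is a verbatim repeat of the Lemma~3 computation, and it is worth emphasizing that the orthogonality of $\rho_2$ with the other $\rho_j$ is never used $-$ only its orthogonality with $\sigma$ $-$ so the conclusion is insensitive to how the rest of the ensemble is chosen.
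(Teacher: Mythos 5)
Your proposal is correct and follows essentially the same route as the paper, which simply notes that the theorem ``follows from Lemma 3'': you apply the Lemma-3 argument to an arbitrary non-$\sigma$ member, using only the vanishing condition against $\sigma^{\otimes n}$ from Proposition 2 together with the fact (via Lemma 2, iterated) that $(\mathcal{S}^{\otimes n})^{\perp}$ contains no product state. Your explicit remarks that the other orthogonality relations are never used and that $\mathcal{S}^{\otimes n}$ must be certified as a UPB across the $\mathcal{H}_A^{\otimes n}:\mathcal{H}_B^{\otimes n}$ cut are exactly the points the paper leaves implicit.
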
 

The proof follows from Lemma 3. Notice that in any such ensemble the
state $\sigma$, in the single copy case or $\sigma^{\otimes n}$,
in the many copy case acts as a {}``lock down'' state robbing off
the possibility to correctly identify, even conclusively, any other
state in the ensemble but itself. 

Qualitatively speaking, any set of orthogonal quantum states therefore
must belong to either of the following three classes based on their
local distinguishability in the \emph{many copy} limit: (a) perfectly
distinguishable, (b) conclusively locally distinguishable, and (c)
not conclusively locally distinguishable. As Theorem 2 shows, there
exist bipartite orthogonal quantum states that are not conclusively
distinguishable by LOCC with many copies. Therefore, the only conceivable
way to extract the which state information must involve using entanglement
as an additional resource. Clearly such ensembles are less distinguishable
than other orthogonal ensembles for they are either perfectly distinguishable
or conclusively distinguishable when multiple copies are available,
and therefore allows us to determine the state of the system either
with certainty or with some nonzero probability. We therefore say
that ensembles of type (c) are maximally indistinguishable. From nonlocality
point of view, it is evident that such ensembles are {}``maximally
nonlocal''. 

To conclude, we have shown that nonlocality of quantum information
is perhaps more subtle than previously thought. In earlier examples
\cite{Bennett-I-99,Horodecki-2003} nonlocal nature of quantum information
was demonstrated within the set up of LOCC distinguishability of quantum
states, where the problem was defined with the assumption that only
a single copy of the unknown state is available. By removing this
constraint on the number of copies we have been able to show that
pure states cease to be nonlocal. On the other hand orthogonal mixed
states are found to be robust carriers of nonlocality because they
may not be perfectly distinguishable even in the many copy limit.
We have also argued that a set of orthogonal quantum states can be
appropriately called {}``maximally nonlocal'' iff the set is maximally
indistinguishable. 

Two issues are of immediate interest. First, it would be very interesting
to construct examples of maximally nonlocal orthogonal states without
invoking UPBs. If such an example could be found we will get a better
understanding of what classes of quantum states admit this specific
property of maximal nonlocality. Secondly, an example of a set of
orthogonal states of type (b) will help us to complete the general
picture.

\end{document}